\documentclass[11pt]{article}

\usepackage[margin=1in]{geometry}
\usepackage{amsmath,amssymb,amsthm,mathtools}
\usepackage{aliascnt}
\usepackage{booktabs}
\usepackage{microtype}
\usepackage[table,dvipsnames]{xcolor}
\usepackage{hyperref}
\usepackage[nameinlink,capitalise,noabbrev]{cleveref}

\hypersetup{
    colorlinks=true,
    linkcolor=blue,
    citecolor=ForestGreen,
    urlcolor=blue!60!black
}

\numberwithin{equation}{section}

\newtheorem{theorem}{Theorem}[section]
\newaliascnt{lemma}{theorem}
\newtheorem{lemma}[lemma]{Lemma}
\aliascntresetthe{lemma}
\newaliascnt{proposition}{theorem}
\newtheorem{proposition}[proposition]{Proposition}
\aliascntresetthe{proposition}
\newaliascnt{corollary}{theorem}
\newtheorem{corollary}[corollary]{Corollary}
\aliascntresetthe{corollary}
\theoremstyle{definition}
\newaliascnt{definition}{theorem}
\newtheorem{definition}[definition]{Definition}
\aliascntresetthe{definition}
\newaliascnt{remark}{theorem}

\aliascntresetthe{remark}

\crefname{theorem}{Theorem}{Theorems}
\Crefname{theorem}{Theorem}{Theorems}
\crefname{lemma}{Lemma}{Lemmas}
\Crefname{lemma}{Lemma}{Lemmas}
\crefname{proposition}{Proposition}{Propositions}
\Crefname{proposition}{Proposition}{Propositions}
\crefname{corollary}{Corollary}{Corollaries}
\Crefname{corollary}{Corollary}{Corollaries}
\crefname{definition}{Definition}{Definitions}
\Crefname{definition}{Definition}{Definitions}
\crefname{remark}{Remark}{Remarks}
\Crefname{remark}{Remark}{Remarks}
\crefname{section}{Section}{Sections}
\Crefname{section}{Section}{Sections}
\crefname{equation}{Equation}{Equations}
\Crefname{equation}{Equation}{Equations}

\DeclarePairedDelimiter\norm{\lVert}{\rVert}
\DeclarePairedDelimiter\abs{\lvert}{\rvert}

\newcommand{\R}{\mathbb{R}}
\newcommand{\eps}{\varepsilon}

\newcommand{\indicator}{\mathbb{I}}
\newcommand{\Sphere}{\mathbb{S}}
\newcommand{\MAXIP}{\operatorname{MAX\text{-}IP}}
\newcommand{\Chamfer}{\operatorname{Chamfer}}
\newcommand{\rank}{\operatorname{rank}}
\newcommand{\arank}{\operatorname{rk}}
\newcommand{\adeg}{\deg}
\newcommand{\poly}{\operatorname{poly}}
\newcommand{\calV}{\mathcal{V}}
\newcommand{\calQ}{\mathcal{Q}}
\newcommand{\calX}{\mathcal{X}}

\title{Near-Optimal Dimension Lower Bounds for\\
Single-Vector Embeddings of Maximum Inner Product Similarity}
\author{
Rajesh Jayaram \footnote{Google Research. \texttt{rkjayaram@google.com}}
\and
Honghao Lin\footnote{Google Research, Carnegie Mellon University / Texas A\&M University. \texttt{honghaol3010@gmail.com}}
\and
Vahab Mirrokni\footnote{Google Research. \texttt{mirrokni@google.com}}
\and
David P. Woodruff\footnote{Google Research and Carnegie Mellon University. \texttt{dpwoodru@gmail.com }}
}
\date{}

\begin{document}
\maketitle

\begin{abstract}
Multi-vector embeddings represent items by point clouds and compare query and
document point clouds using Chamfer similarity, whereas single-vector
embeddings use ordinary inner products.  For singleton queries, Chamfer
becomes maximum inner product similarity (MAX-IP).  In our setting, MUVERA
gives dimension $m^{O(1/\eps^2)}$~\cite{dhulipala2024muvera}, whereas the previous
lower bound $(\eps^2m)^{\Omega(1/\eps)}$~\cite{jayaram2026expressive} left a
gap between $1/\eps$ and $1/\eps^2$ in the exponent of $m$.

We nearly close this gap.  For every fixed $\delta\in(0,1)$, there are
constants $A_\delta,c_\delta>0$ such that, for all sufficiently small
$\eps>0$ and every $m\ge(1/\eps)^{A_\delta}$, there exist unit query vectors
and document point clouds of at most $m$ unit vectors for which every
single-vector approximation of all pairwise MAX-IP values to additive error
$\eps$ has dimension
\[
    D \ge m^{c_\delta/\eps^{2-2\delta}}.
\]
This holds even for fully data-dependent representations chosen after seeing
the dataset.  It also applies to Chamfer because all queries are singletons.
Since $\delta$ can be arbitrarily small, the exponent approaches the
$O(1/\eps^2)$ dependence of the upper bound.

The proof combines Sherstov's pattern matrix method with polynomial-size,
constant-width DNF formulas computing functions of approximate degree
$\Omega(k^{1-\delta})$.  Uniform-width padding and a block encoding create an
$\Omega(\eps)$ gap.  A dummy coordinate then equalizes all false inputs,
yielding a unit-sphere MAX-IP matrix that is an exact two-valued affine image
of the DNF pattern matrix with gap at least $8\eps$.  This allows the
approximate-rank bound to apply.

The proof was first obtained using a fully automated Gemini-based agentic system developed internally at Google. The authors have verified the proof and edited it for clarity of presentation.
\end{abstract}

\section{Introduction}
\label{sec:introduction}

Vector representations are a basic interface between learned models and
large-scale retrieval systems.  In the traditional single-vector paradigm, a
query and a document are represented by vectors $q,x\in\R^d$, and their
similarity is the inner product $\langle q,x\rangle$.  This representation is
compact and is compatible with highly optimized maximum inner product search
data structures, but it compresses an entire data item into one vector.

Late-interaction models instead associate each data item with a collection of
vectors.  ColBERT~\cite{khattab2020colbert} and subsequent systems such as
ColBERTv2 and PLAID~\cite{santhanam2022colbertv2,santhanam2022plaid} retain
token-level information and score a query against a document by summing
per-query-vector maximum similarity interactions.  We use the query-normalized form
\begin{equation}
    \Chamfer(Q,X)
    := \frac{1}{|Q|}\sum_{q\in Q}\max_{x\in X}\langle q,x\rangle,
    \label{eq:chamfer-intro}
\end{equation}
which differs from the sum convention only by the query-dependent factor
$1/|Q|$ and therefore does not change document rankings for a fixed query.
This nonlinear aggregation retains fine-grained query--document interactions
before they are compressed into a single score.  Its practical success has
motivated a large body of work on multi-vector modeling and retrieval.  The
same fine-grained comparison also makes direct evaluation expensive: a naive
exact computation of \cref{eq:chamfer-intro} examines every query--document
vector pair.

The target dimension of a single-vector surrogate measures whether the
additional structure of a point cloud gives a genuine representational
advantage.  An $m$-point cloud in $\R^d$ uses $md$ scalar coordinates.  If
every finite family of such point clouds could always be replaced by single
vectors in dimension $O(md)$ while preserving all pairwise similarities, then
multi-vector representations would have no worst-case advantage in
representation size.
This leads to the central question of this work.
\begin{center}
\emph{How large must $D$ be for single-vector inner products to approximate
Chamfer similarity?}
\end{center}

More formally, consider finite collections of query and document point clouds,
each containing at most $m$ unit vectors.  We ask for the smallest $D$ such
that one can assign vectors $a_Q,b_X\in\R^D$ satisfying
\begin{equation}
    \abs*{\langle a_Q,b_X\rangle-\Chamfer(Q,X)}\le\eps
    \qquad\text{for every query }Q\text{ and document }X.
    \label{eq:intro-embedding-question}
\end{equation}
The assignments may depend on the entire dataset; a lower bound in this
data-dependent model also applies to oblivious or efficiently computable
embeddings.

MUVERA~\cite{dhulipala2024muvera} gives randomized oblivious assignments that,
for a fixed query--document pair, achieve additive error $\eps$ with failure
probability at most $\rho$ in dimension
\begin{equation}
    D=\left(\frac{m}{\eps}\right)^{O(1/\eps^2)}
      \log(1/\rho)\log(d/\rho),
    \label{eq:muvera-upper-intro}
\end{equation}
where constant-factor differences in the definition of $m$ are absorbed by
the asymptotic notation.  For finite datasets, one may choose $\rho$ small
enough, apply a union bound over all query--document pairs, and fix a successful
realization.  In the regime relevant below, where $d$ and $1/\eps$ are bounded
by fixed powers of $m$ and the dataset size is at most $2^{\poly(m)}$, the
resulting deterministic existence bound is still $D=m^{O(1/\eps^2)}$.  Thus
bounded finite datasets do admit single-vector surrogates, but the required
dimension can be much larger than the original point-cloud representation.

The first dimension lower bound against arbitrary data-dependent
representations was obtained in~\cite{jayaram2026expressive}.  For
$\eps\in(0,1/12)$ and $m\ge1/\eps^2$, that work constructed singleton queries
and document point clouds whose MAX-IP matrix requires single-vector dimension
$(\eps^2m)^{\Omega(1/\eps)}$ to approximate.  Since
\[
    \Chamfer(\{q\},X)=\MAXIP(q,X)
    =\max_{x\in X}\langle q,x\rangle,
\]
the construction already showed that the obstruction is present in the
simplest nonlinear component of Chamfer similarity.  The proof realized the
pattern matrix of the $k$-bit NAND function with
$k=\Theta(1/\eps^2)$.  The approximate degree
$\Theta(\sqrt{k})=\Theta(1/\eps)$ of NAND yielded the exponent
$\Theta(1/\eps)$, leaving open the gap to the $O(1/\eps^2)$ exponent in
\cref{eq:muvera-upper-intro}~\cite{jayaram2026expressive}.

\subsection{Our result}
\label{subsec:our-result}

We nearly close this gap already for MAX-IP, and hence also for Chamfer
similarity.  Our main result is the following.

\begin{theorem}[Main theorem]
\label{thm:main}
Fix any constant $\delta\in(0,1)$.  There exist constants
$A_\delta,c_\delta>0$ and $\eps_\delta\in(0,1/12)$ such that the following
holds for every $\eps\in(0,\eps_\delta)$ and every integer
$m\ge (1/\eps)^{A_\delta}$.

There are an ambient dimension $d\le m^{O_\delta(1)}$, a finite set of unit
query vectors $\calQ\subset\Sphere^{d-1}$, and a finite family $\calX$ of
nonempty document point clouds $X\subset\Sphere^{d-1}$ with
$1\le |X|\le m$, such that any maps
\[
    \Phi_Q:\calQ\to\R^D,
    \qquad
    \Phi_X:\calX\to\R^D
\]
satisfying
\begin{equation}
    \abs*{
      \langle\Phi_Q(q),\Phi_X(X)\rangle
      -\max_{x\in X}\langle q,x\rangle
    }
    \le \eps
    \qquad
    (q\in\calQ,\ X\in\calX)
    \label{eq:main-embedding-guarantee}
\end{equation}
must have
\[
    D\ge m^{c_\delta/\eps^{2-2\delta}}.
\]
Moreover,
\[
    |\calQ|,|\calX|\le 2^{m^{O_\delta(1)}}.
\]
\end{theorem}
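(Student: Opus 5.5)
The plan is to reduce to Sherstov's pattern matrix method, as the abstract indicates. Let $f:\{0,1\}^k\to\{0,1\}$ be a function computed by a DNF of width $w=O_\delta(1)$ with $\poly(k)$ terms and with $\eps'$-approximate degree $\Omega(k^{1-\delta})$ for a constant $\eps'$; such functions are known (e.g.\ from the Bun--Thaler line of work on approximate degree of AC$^0$). Form the pattern matrix $M_f$ with rows indexed by $y\in\{0,1\}^{nk}$ (say $n=4k$) and columns by pairs $(S,z)$, where $S$ selects one coordinate from each of the $k$ blocks of size $n$ and $z\in\{0,1\}^k$. Entry $(y,(S,z))$ is $f(y|_S\oplus z)$. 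Sherstov's theorem gives $\arank_{1/3}(M_f)\ge (n/k)^{\Omega(\adeg(f))}$, or more usefully $\ge 2^{\Omega(\adeg(f))}$-type bounds. Choosing $n$ polynomially larger than $k$ yields $\arank\ge n^{\Omega(k^{1-\delta})}$. An $\eps$-additive single-vector embedding of a two-valued matrix whose values are separated by a gap $g\ge 8\eps$ gives, after the affine rescaling $(t-\alpha)/g$, a rank-$(D+1)$ matrix that approximates the $0/1$ pattern matrix to error $\le 1/8$. Hence $D+1\ge\arank_{1/8}(M_f)$.

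The core step is to realize $M_f$ as a MAX-IP matrix. Each document $X_{(S,z)}$ gets one unit vector per DNF term $T$. For each literal of $T$ (a bit $i$ with a required value), the shift $z_i$ and the selected position $S_i$ determine which coordinate $y_{i,S_i}$ is tested and with which sign. Encode the query $y$ as a vector whose block $(i,j)$ stores $\pm1$ according to $y_{i,j}$. The document vector for $T$ places weight $\pm 1/\sqrt{w}$ on the coordinates $(i,S_i)$ for literals of $T$. Then $\langle q_y,x_T\rangle$ equals a fixed value exactly when $T$ is satisfied, and it drops by a fixed multiple of $1/w$ for each violated literal. Uniform-width padding makes every term have exactly $w$ literals. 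The query must also have unit norm, so its coordinates are scaled by $1/\sqrt{nk}$. That scaling kills the gap, which is why a block encoding is needed: the query instead carries, for each $(i,j)$, a small indicator gadget, so that inner products depend only on the tested bits. The gap is then $\Theta(1/w)$ times a normalization factor. We set the normalization so that the gap is exactly $8\eps$, mixing in a constant direction to reach unit norm.

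A true input then gives the same max value $\beta$. A false input's max equals $\beta$ minus the violation count of the best term, and that count varies between inputs. This is fixed by adding to every document a dummy vector whose inner product with every query is exactly $\beta-8\eps$, placed on a coordinate shared by all queries. Because $\eps$ is small, the dummy dominates all violating terms. The resulting matrix takes exactly two values, $\beta$ and $\beta-8\eps$. Documents have $\poly(k)+1$ vectors, and we need this to be $\le m$. We also need $8\eps$ to equal the achievable gap $\Theta(1/w)\cdot\gamma$, where $\gamma$ is the dilution from mixing.

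The parameter bookkeeping is the step I expect to be the main obstacle. Sherstov's gain from $n$ enters only through the base, and we need $k\approx 1/\eps^{2}$-ish so that $k^{1-\delta}\approx\eps^{-(2-2\delta)}$ while the gap is $\Omega(\eps)$. Achieving gap $\eps$ with $k\sim\eps^{-2}$ requires the encoding to lose only a $\sqrt{k}$ factor: the constant-width literals contribute $\Theta(1/\sqrt{k})$-sized inner products under a norm split across $k$ blocks. So the first attempt is to give each block norm $1/\sqrt{k}$ and to take $\eps\asymp 1/\sqrt{k}$ up to $w$-constants, possibly with $k$ rescaled by $\delta$.

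Then $\log(n)\cdot k^{1-\delta}$ with $n=m^{\Theta(1)}$ and $m\ge(1/\eps)^{A_\delta}$ gives $D\ge m^{c_\delta/\eps^{2-2\delta}}$, where $A_\delta$ is chosen so that the $\poly(k)$ term count is at most $m$ and $n/k\ge m^{\Omega(1)}$. The sizes follow directly: $|\calQ|=2^{nk}$ and $|\calX|=n^k2^k$, both at most $2^{m^{O(1)}}$, and the dimension is $d=O(nk)$.
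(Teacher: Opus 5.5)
There is a genuine gap in the core realization step, and it comes from the orientation you chose. You put the long string $y\in\{0,1\}^{nk}$ on the query side and the selector--shift pair $(S,z)$ on the document side, with one vector per term. A single unit query vector must then carry all $nk$ bits of $y$, while a term vector of $X_{(S,z)}$ reads only the coordinates $(i,S_i)$.

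With query coordinates $\pm1/\sqrt{nk}$, each literal contributes $\pm1/\sqrt{nkw}$. This is the same as your ``first attempt'' of block norm $1/\sqrt k$, since that norm is still spread over the $n$ positions of the block. So the gap is $\Theta(1/\sqrt{nkw})$, not $\Theta(1/\sqrt{kw})$. You need $n=m^{\Omega(1)}$ for the base of the rank bound, so this is far below $8\eps$.

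No ``indicator gadget'' repairs this. Already for a single literal, suppose fixed unit vectors $x_{j,0},x_{j,1}$ must read $y_j$ from $q_y$ with margin $g$ for every $j\in[n]$, i.e.\ $(-1)^{y_j}\langle q_y,x_{j,1}-x_{j,0}\rangle\ge g$. Then $\hat q(j):=\mathbb{E}_y[(-1)^{y_j}q_y]$ satisfies $\|\hat q(j)\|\ge g/2$. Parseval gives $ng^2/4\le\mathbb{E}_y\|q_y\|^2=1$, so $g\le 2/\sqrt n$.

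A consistency check points the same way. Your documents have $\poly(k)+1$ points regardless of $n$. The MUVERA bound, with a union bound over the $2^{O(nk)}$ pairs, then caps the approximate rank at $(m/\eps)^{O(1/\eps^2)}\cdot O(n^2k^2)$. That is only quadratic in $n$, which is incompatible with $n^{\Omega(k^{1-\delta})}$ as $n\to\infty$ once $k$ is large.

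The missing idea is to swap the roles: index queries by $(V,w)$ and documents by $y\in\{0,1\}^{kB}$.
\begin{itemize}
\item The query is $k$-sparse. In block $r$ it places $1/\sqrt k$ on the positive or negative copy of the selected coordinate $v_r$, according to $w_r$.
\item The document for $y$ contains, for every term $T_i$ and \emph{every} alignment $\tau\in I_{u_{i,1}}\times\cdots\times I_{u_{i,W}}$, the unit vector with $1/\sqrt W$ on the positive or negative copy of $\tau(j)$, according to $y_{\tau(j)}\oplus c_{i,j}$.
\end{itemize}
The inner product counts aligned satisfied literals divided by $\sqrt{kW}$, and misaligned coordinates contribute nothing. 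So the maximum over the cloud automatically selects the alignment agreeing with $V$, and the gap $1/\sqrt{kW}$ does not depend on $B$.

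The price is $sB^W+1$ points per document. This ties the block size to the budget via $B=\lfloor((m-1)/s)^{1/W}\rfloor$. Constant width $W$, together with $m\ge s^2+1$ (which is what $m\ge(1/\eps)^{A_\delta}$ secures), gives $B\ge m^{\Omega(1/W)}$. With $k\approx 1/(64W\eps^2)$, this yields $D+1\ge\frac1{225}B^{ak^{1-\delta}}$. This is also why the hard function must be a constant-width, polynomial-size DNF (Sherstov's DNF theorem), not just any AC$^0$ function of high approximate degree.

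Your dummy coordinate and affine rescaling then work essentially as you describe. The dummy sits at $\eta=(W-1)/\sqrt{kW+(W-1)^2}$, exactly the one-violated-literal level. The cardinalities become $|\calQ|=B^k2^k$ and $|\calX|\le 2^{kB}$.
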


Because $\delta\in(0,1)$ may be chosen arbitrarily small, the exponent
$2-2\delta$ can be made arbitrarily close to $2$.  The threshold exponent
$A_\delta$ is a constant that depends on $\delta$; we do not optimize this
dependence.

The comparison with the known bounds is summarized below.  In the parameter
regime of the theorem, for datasets with at most $2^{\poly(m)}$ queries and
documents, the MUVERA failure probability can be chosen small enough to fix one
realization that succeeds for every pair~\cite{dhulipala2024muvera}.
\begin{center}
\begin{tabular}{@{}lll@{}}
\toprule
Result & Dimension & Scope \\
\midrule
MUVERA~\cite{dhulipala2024muvera}
  & $m^{O(1/\eps^2)}$ & Chamfer upper bound \\
Previous lower bound~\cite{jayaram2026expressive}
  & $(\eps^2m)^{\Omega(1/\eps)}$
  & MAX-IP, $m\ge1/\eps^2$ \\
\rowcolor{blue!15}
This work
  & $m^{\Omega_\delta(1/\eps^{2-2\delta})}$
  & MAX-IP, $m\ge(1/\eps)^{A_\delta}$ \\[-0.35ex]
\bottomrule
\end{tabular}
\end{center}

\subsection{Technical overview}
\label{subsec:overview}

Any single-vector representation in dimension $D$ produces a score matrix of
rank at most $D$.  We therefore construct a MAX-IP matrix $M$ whose
$\eps$-approximate rank is large.  The construction realizes $M$ as an affine
image of a pattern matrix, allowing us to invoke Sherstov's pattern matrix
method~\cite{sherstov2008pattern}.

Let $h:\{0,1\}^k\to\{0,1\}$ be a Boolean function and let
$h_\pm=2h-1$ be its sign version.  For an integer $N$ divisible by $k$, index
the rows of the transposed pattern matrix by $(V,w)$, where $V$ selects one
coordinate from each of $k$ equal blocks of $[N]$ and $w\in\{0,1\}^k$, and
index its columns by $y\in\{0,1\}^N$.  Its entries are
\[
    F_{(V,w),y}=h_\pm(w\oplus y|_V).
\]
If $h_\pm$ has large approximate degree, then every constant-error entrywise
approximation to $F$ has large rank; quantitatively, the rank grows as
$(N/k)^{\Omega(\adeg(h_\pm))}$.

The previous MAX-IP lower bound used NAND for
$h$~\cite{jayaram2026expressive}.  We instead use Sherstov's construction of a
polynomial-size, constant-width DNF with approximate degree
$\Omega(k^{1-\delta})$~\cite{sherstov2022dnf}; its sign version inherits the
same asymptotic lower bound.  With $k=\Theta_\delta(1/\eps^2)$, the pattern
matrix has approximate-rank exponent
$\Omega_\delta(1/\eps^{2-2\delta})$.  The remaining task is to realize this
matrix by unit vectors and point clouds of controlled size while preserving
an $\Omega(\eps)$ gap between the two Boolean outputs.

Write the DNF as an OR of $s=k^{O_\delta(1)}$ terms, each containing at most
$W=O_\delta(1)$ literals.  We first make the width exact: a shorter term $T$
can be replaced by $(T\wedge z_i)\vee(T\wedge\neg z_i)$ using a variable
absent from $T$, and the operation is repeated until every term has exactly
$W$ distinct literals.  This padding preserves the function, and hence its
approximate degree, while increasing the number of terms by only a
$\delta$-dependent constant factor.  Exact width ensures that every fully
satisfied term produces the same inner product.

Next write $N=kB$ and partition the coordinates into $k$ blocks of size $B$.
A query indexed by $(V,w)$ chooses one active coordinate in every block and
records the corresponding bit of $w$ in either a positive or a negative copy
of that coordinate.  A document point cloud indexed by $y$ contains, for each
DNF term, all $B^W$ alignments of its literals with coordinates in their
corresponding blocks.  The resulting query and term vectors are normalized,
and their inner product counts the literals that are simultaneously aligned
and satisfied by $w\oplus y|_V$.  Maximizing over alignments therefore yields
\[
  \frac{W}{\sqrt{kW}}
  \quad\text{if the DNF is true},
  \qquad
  \text{at most }\frac{W-1}{\sqrt{kW}}
  \quad\text{if it is false}.
\]
The raw gap is $1/\sqrt{kW}=\Theta_\delta(\eps)$.  Including the dummy vector
introduced below, a document contains at most $sB^W+1$ vectors.  Under the
theorem's lower bound on $m$, we have $m\ge s^2+1$ and may choose
\[
    B=\left\lfloor\left(\frac{m-1}{s}\right)^{1/W}\right\rfloor
      =m^{\Omega_\delta(1)}.
\]

The false case still presents a problem: its maximum may depend on the input,
whereas the pattern matrix has only two values.  We resolve this with a
one-coordinate lift.  After scaling the original query, we append a fixed
coordinate $\eta$ and add the corresponding unit basis vector to every
document point cloud; all term vectors have zero in the new coordinate.  This
dummy vector places a floor at $\eta$.  The scale is chosen so that, on a
false input, every term vector has inner product at most $\eta$, whereas on a
true input some term vector has inner product $v_1>\eta$.  Thus false instances
have MAX-IP exactly $\eta$, true instances have MAX-IP exactly $v_1$, and all
vectors remain on the unit sphere.  Consequently, the similarity matrix $M$
satisfies the exact affine identity
\[
    M=\frac{v_1+\eta}{2}J
      +\frac{v_1-\eta}{2}F,
\]
where $J$ is the all-ones matrix.

Our choice of $k$ guarantees $v_1-\eta\ge8\eps$.  Hence a rank-$D$
$\eps$-approximation to $M$ becomes a $1/4$-approximation to $F$ after an
affine transformation of rank at most $D+1$.  Sherstov's rank lower bound then
gives
\[
    D+1\ge B^{\Omega(k^{1-\delta})}
      =m^{\Omega_\delta(1/\eps^{2-2\delta})},
\]
which is the claimed dimension lower bound.

\subsection{Related work}
\label{subsec:related-work}

ColBERT introduced the late-interaction paradigm, representing queries and
documents by token-level vectors and scoring them by the Chamfer similarity~\cite{khattab2020colbert}.  Related token-level architectures
include COIL, which combines contextual token representations with exact
lexical matching through inverted lists~\cite{gao2021coil}; ColBERTer, which
reduces storage by retaining selected whole-word
representations~\cite{hofstatter2022colberter}; and AligneR, which learns sparse
pairwise token alignments and per-token saliences~\cite{qian2022sparse}.
ColBERTv2 combined residual compression with denoised
supervision~\cite{santhanam2022colbertv2}, whereas PLAID introduced centroid
interaction and pruning in an optimized retrieval
engine~\cite{santhanam2022plaid}.  Specialized multi-vector search engines
include DESSERT for vector-set search~\cite{engels2023dessert}, EMVB with
bit-vector prefiltering~\cite{nardini2024bitvectors}, and WARP for XTR-based
late-interaction retrieval~\cite{scheerer2025warp}.  In
multimodal retrieval, ColPali applies late interaction to multi-vector
representations of document-page images and introduces the ViDoRe
benchmark~\cite{faysse2025colpali}.

Chamfer objectives have also been studied algorithmically outside information
retrieval.  For the distance variant, Bakshi et al. gave a near-linear-time
multiplicative approximation for two point sets~\cite{bakshi2023chamfer}; other
work considers Chamfer distance under translation~\cite{halevi2026translation}
and in the fully dynamic setting~\cite{goranci2025dynamic}.  These results concern
distance computation, rather than single-vector approximation of a prescribed
similarity matrix.  For the normalized inner-product similarity studied here,
MUVERA constructs randomized asymmetric fixed-dimensional encodings for sets
of unit vectors whose inner product gives an additive approximation, thereby
reducing multi-vector retrieval to single-vector maximum inner product
search~\cite{dhulipala2024muvera}.

Recent work has identified limitations of single-vector retrieval under other
objectives.  Weller et al. bound the number of top-$k$ document subsets
realizable by fixed-dimensional inner-product retrieval and introduce the
LIMIT benchmark~\cite{weller2025limitations}.  A contemporaneous study observes
that dimensionality alone does not explain failures on LIMIT-like data,
identifies domain shift and score--relevance misalignment as major factors,
and compares single- and multi-vector retrieval through toy models of the
``drowning in documents'' effect~\cite{agarwal2026strengths}.  Lakshman et al.
study a different notion, stability of high-dimensional near-neighbor search,
and give conditions under which multi-vector search under the Chamfer distance
inherits single-vector stability, whereas average pooling need
not~\cite{lakshman2025stability}.  These works
concern ranking realizability, empirical generalization, or stability rather
than entrywise additive approximation of a prescribed Chamfer or MAX-IP score
matrix.

The work most directly preceding ours proves such an additive-approximation
lower bound against arbitrary data-dependent single-vector representations.
Using the pattern matrix of NAND, it obtains
$D=(\eps^2m)^{\Omega(1/\eps)}$ for $m\ge1/\eps^2$
\cite{jayaram2026expressive}.  Our proof draws on Sherstov's pattern matrix
method, which lower-bounds the approximate rank of the pattern matrix
associated with a Boolean function in terms of its approximate degree
\cite{sherstov2008pattern}, and on his construction, for every fixed
$\delta>0$, of polynomial-size, $O_\delta(1)$-width DNF and CNF formulas with
approximate degree $\Omega(n^{1-\delta})$~\cite{sherstov2022dnf}.  The new step
needed to combine these ingredients is an exact affine realization of the
resulting pattern matrix by unit-sphere MAX-IP, with a controlled additive gap
and at most $m$ vectors in every document point cloud.


\section{Preliminaries}
\label{sec:preliminaries}

For an integer $r\ge1$, write $[r]=\{1,\ldots,r\}$.  All logarithms are base
two unless stated otherwise.  For $A\in\R^{s\times t}$, let
$\norm{A}_\infty=\max_{i,j}|A_{ij}|$.  The unit sphere in $\R^d$ is denoted by
$\Sphere^{d-1}$.

\subsection{MAX-IP, Chamfer similarity, and embeddings}

For a vector $q\in\R^d$ and a nonempty finite point cloud $X\subset\R^d$,
define
\[
    \MAXIP(q,X):=\max_{x\in X}\langle q,x\rangle.
\]
For nonempty point clouds $Q,X\subset\R^d$, the normalized Chamfer similarity
is
\[
    \Chamfer(Q,X)
    :=\frac1{|Q|}\sum_{q\in Q}\MAXIP(q,X).
\]
The asymmetry is intentional: $Q$ is the query and $X$ is the document.
For singleton queries, $\Chamfer(\{q\},X)=\MAXIP(q,X)$; hence MAX-IP lower
bounds transfer immediately to normalized Chamfer similarity.

\begin{definition}[Data-dependent MAX-IP embedding]
\label{def:data-dependent-embedding}
Let $\calQ\subset\R^d$ be a finite set of query vectors and let $\calX$ be a
finite family of nonempty finite point clouds in $\R^d$.  A data-dependent
$\eps$-embedding of MAX-IP into inner product in dimension $D$ consists of
arbitrary maps $\Phi_Q:\calQ\to\R^D$ and $\Phi_X:\calX\to\R^D$ such that
\[
    \abs*{\langle\Phi_Q(q),\Phi_X(X)\rangle-\MAXIP(q,X)}
    \le\eps
\]
for every $q\in\calQ$ and $X\in\calX$.
\end{definition}

The maps in \cref{def:data-dependent-embedding} may depend on the full
collections $\calQ$ and $\calX$ and need not be computable or oblivious.  A
lower bound in this model therefore also applies to any more structured
embedding scheme.

\subsection{Approximate rank}

\begin{definition}[Approximate rank]
For a real matrix $M$ and $\gamma\ge0$, its $\gamma$-approximate rank is
\[
    \arank_\gamma(M)
    :=\min\bigl\{\rank(A):\norm{A-M}_\infty\le\gamma\bigr\}.
\]
\end{definition}

Approximate rank is exactly the algebraic quantity measured by
data-dependent single-vector embeddings.

\begin{proposition}[Embeddings and approximate rank]
\label{prop:embedding-rank}
Let $M\in\R^{s\times t}$.  Allowing $D=0$, the minimum dimension $D$ for which
there exist vectors $a_1,\ldots,a_s,b_1,\ldots,b_t\in\R^D$ satisfying
\[
    |\langle a_i,b_j\rangle-M_{ij}|\le\eps
    \qquad(i\in[s],\ j\in[t])
\]
is $\arank_\eps(M)$.
\end{proposition}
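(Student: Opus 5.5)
The plan is to prove the two inequalities $D_{\min}\le \arank_\eps(M)$ and $\arank_\eps(M)\le D_{\min}$ separately, where $D_{\min}$ denotes the minimum embedding dimension in the statement. Both directions come from the standard correspondence between rank factorizations and Gram-type matrices $\langle a_i,b_j\rangle$.

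\textbf{Vectors give a matrix of small rank.} Suppose vectors $a_1,\ldots,a_s,b_1,\ldots,b_t\in\R^D$ satisfy the entrywise guarantee. Let $P\in\R^{s\times D}$ have rows $a_i$ and $R\in\R^{t\times D}$ have rows $b_j$, and set $A:=PR^\top$. Then $A_{ij}=\langle a_i,b_j\rangle$, so $\norm{A-M}_\infty\le\eps$. Moreover $\rank(A)\le D$, since $A$ factors through $\R^D$. Hence $\arank_\eps(M)\le D$ for every feasible $D$.

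\textbf{A matrix of small rank gives vectors.} Take $A$ with $\norm{A-M}_\infty\le\eps$ and $r:=\rank(A)=\arank_\eps(M)$. If $r\ge1$, use a rank factorization $A=PR^\top$ with $P\in\R^{s\times r}$ and $R\in\R^{t\times r}$; for instance, $P$ can be a basis of the column space and $R$ the coefficient matrix. Let $a_i$ be the $i$-th row of $P$ and $b_j$ the $j$-th row of $R$. Then $\langle a_i,b_j\rangle=A_{ij}$, which is within $\eps$ of $M_{ij}$. This gives an embedding in dimension $r$.

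\textbf{The case $r=0$.} This is exactly why the statement allows $D=0$. Here $A=0$, so every $|M_{ij}|\le\eps$. In $\R^0$ all inner products equal the empty sum $0$, so the guarantee holds with all vectors in $\R^0$.

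Combining the two directions, the minimum $D$ equals $\arank_\eps(M)$. The minimum exists because $\arank_\eps(M)$ is a minimum over a nonempty set: $A=M$ itself is feasible. There is no real obstacle here; the only point needing care is the degenerate $D=0$ convention.
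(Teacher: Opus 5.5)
Your proof is correct and follows the paper's argument. The matrix of inner products $\langle a_i,b_j\rangle$ factors through $\R^D$ and so has rank at most $D$, and conversely the rows of a rank factorization of an optimal approximant give vectors in $\R^r$; your explicit handling of the case $r=0$ and of the existence of the minimum is a small addition that the paper leaves implicit.
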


\begin{proof}
The matrix $A$ with entries $A_{ij}=\langle a_i,b_j\rangle$ factors as
$UV^\top$ and has rank at most $D$, proving
$\arank_\eps(M)\le D$.  Conversely, if $A$ has rank $r$ and
$\norm{A-M}_\infty\le\eps$, choose a rank factorization $A=UV^\top$ with
$U\in\R^{s\times r}$ and $V\in\R^{t\times r}$.  The rows of $U$ and $V$ give
the required vectors in $\R^r$.
\end{proof}

Applied to the MAX-IP score matrix
$M_{q,X}:=\MAXIP(q,X)$ indexed by $\calQ\times\calX$, the proposition
identifies the optimal data-dependent embedding dimension with
$\arank_\eps(M)$.

\subsection{Approximate degree and pattern matrices}

\begin{definition}[Approximate degree]
\label{def:approximate-degree}
For a function $f:\{0,1\}^k\to\R$ and $\alpha\ge0$, its
$\alpha$-approximate degree $\adeg_\alpha(f)$ is the least degree of a real
multivariate polynomial $p$ such that
$|p(z)-f(z)|\le\alpha$ for every $z\in\{0,1\}^k$.
\end{definition}

Throughout, $\oplus$ denotes coordinatewise XOR.

Fix integers $N,k\ge1$ such that $k$ divides $N$, and partition $[N]$ into
$k$ consecutive blocks $I_1,\ldots,I_k$, each of size $N/k$.  Let
\begin{equation}
    \calV(N,k)
    :=\bigl\{V\subset[N]:|V\cap I_i|=1\text{ for every }i\in[k]\bigr\}.
    \label{eq:block-selector-family}
\end{equation}
If $V=\{v_1<\cdots<v_k\}\in\calV(N,k)$, then $v_i\in I_i$ for every
$i\in[k]$.  For $y\in\{0,1\}^N$, write
$y|_V=(y_{v_1},\ldots,y_{v_k})$.

\begin{definition}[Pattern matrix]
\label{def:pattern-matrix}
Let $f:\{0,1\}^k\to\R$.  The $(N,k,f)$-pattern matrix is the matrix whose rows
are indexed by $y\in\{0,1\}^N$, whose columns are indexed by
$(V,w)\in\calV(N,k)\times\{0,1\}^k$, and whose entries are
\[
    F_{y,(V,w)}=f(w\oplus y|_V).
\]
\end{definition}

Our MAX-IP matrix later uses $(V,w)$ as row indices and $y$ as column indices,
and is therefore the transpose of the matrix in
\cref{def:pattern-matrix}.  This is immaterial because
$\arank_\gamma(F)=\arank_\gamma(F^\top)$.  We use the following
approximate-rank consequence, which is Theorem~8.1 of
Sherstov~\cite{sherstov2008pattern}.
\begin{theorem}[Pattern matrix method]
\label{thm:pattern-rank}
Let $F$ be the $(N,k,f)$-pattern matrix of a function
$f:\{0,1\}^k\to\{-1,1\}$.  For every $\alpha\in[0,1)$ and every
$\gamma\in[0,\alpha]$,
\[
    \arank_\gamma(F)
    \ge
    \left(\frac{\alpha-\gamma}{1+\gamma}\right)^2
    \left(\frac Nk\right)^{\adeg_\alpha(f)}.
\]
\end{theorem}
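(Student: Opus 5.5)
This is Sherstov's Theorem~8.1, and we would prove it by his generalized-discrepancy argument. There are three steps: take a dual witness for approximate degree, lift it to a pattern matrix, and compare its correlation with $F$ against its spectral norm.

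Write $d=\adeg_\alpha(f)$; if $d=0$ the bound is at most $1$ and holds trivially unless $F$ is approximated by the zero matrix, which $\gamma\le\alpha<1$ rules out.

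\textbf{Step 1 (dual polynomial).} By linear programming duality for the best degree-$(d-1)$ uniform approximation of $f$, there is $\psi:\{0,1\}^k\to\R$ with the following properties:
\[
\sum_z|\psi(z)|=1,\qquad \sum_z\psi(z)p(z)=0\ \text{ for every polynomial } p \text{ with } \deg p<d,\qquad \sum_z\psi(z)f(z)>\alpha .
\]
In Fourier terms, $\hat\psi(S)=2^{-k}\sum_z\psi(z)(-1)^{\sum_{i\in S}z_i}$ vanishes for $|S|<d$, and $|\hat\psi(S)|\le 2^{-k}$ for all $S$.

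\textbf{Step 2 (lifting and spectral norm).} Set $K=2^N(N/k)^k$ and define $\Psi_{y,(V,w)}=\psi(w\oplus y|_V)/K$.

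For each fixed $y$ and $V$, summing over $w$ gives $\sum_w|\psi(w\oplus y|_V)|=1$. Hence $\sum|\Psi_{y,(V,w)}|=1$, and $\langle F,\Psi\rangle=\sum_z\psi(z)f(z)>\alpha$.

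The key computation is the spectrum of a pattern matrix. Expand $\psi$ in characters. Then $\Psi\Psi^\top$ is diagonalized by the characters $\chi_T$ of $\{0,1\}^N$: $\chi_T$ is an eigenvector whenever $T$ has at most one element in each block. One obtains that the singular values of the unnormalized pattern matrix of $\psi$ are
\[
\sqrt{2^{N+k}(N/k)^k}\;|\hat\psi(S)|\,(k/N)^{|S|/2},\qquad S\subseteq[k].
\]
Since $\hat\psi(S)=0$ for $|S|<d$ and $|\hat\psi(S)|\le 2^{-k}$ otherwise, this yields
\[
\|\Psi\|\le \frac{(N/k)^{-d/2}}{\sqrt{2^{N+k}(N/k)^k}} .
\]
This spectral step is the main obstacle. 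We would carry it out by writing $\Psi\Psi^\top$ as a convolution-type operator on $\{0,1\}^N$ and averaging $\chi_T$ over the random choice of $V$. Each element of $T$ survives the averaging with probability $k/N$ when it is the unique element of $T$ in its block, and the average vanishes otherwise.

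\textbf{Step 3 (comparison).} Let $A$ have rank $r$ with $\|A-F\|_\infty\le\gamma$.

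For the lower bound, $\langle A,\Psi\rangle\ge\langle F,\Psi\rangle-\gamma\sum|\Psi|>\alpha-\gamma$.

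For the upper bound, use trace duality and Cauchy--Schwarz on the singular values of $A$. Every entry of $A$ satisfies $|A|\le 1+\gamma$, and $F$ has $2^{N+k}(N/k)^k$ entries. Therefore
\[
\langle A,\Psi\rangle\le\|\Psi\|\,\|A\|_\Sigma\le\|\Psi\|\sqrt r\,\|A\|_F\le \|\Psi\|\sqrt r\,(1+\gamma)\sqrt{2^{N+k}(N/k)^k}.
\]
Substituting the bound on $\|\Psi\|$, the normalizations cancel and we get $\alpha-\gamma<\sqrt r\,(1+\gamma)(N/k)^{-d/2}$. Squaring gives
\[
r\ge\Bigl(\tfrac{\alpha-\gamma}{1+\gamma}\Bigr)^2(N/k)^{d},
\]
which is the claimed bound.
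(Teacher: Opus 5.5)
Your proof is correct. It is essentially Sherstov's own proof of his Theorem~8.1: a dual polynomial, lifted to a pattern matrix of unit $\ell_1$ mass, combined with his singular-value formula for pattern matrices (his Theorem~4.3) and the trace-norm/Frobenius comparison. The paper does not reprove this result and simply cites it, so your route matches the source. Your convolution-plus-averaging-over-$V$ sketch of the spectral step is accurate. In fact every $\chi_T$ is an eigenvector, with eigenvalue zero when $T$ has two elements in some block.
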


\section{High-Approximate-Degree DNFs of Exact Width}
\label{sec:hard-dnf}

For an input $z\in\{0,1\}^k$, a \emph{literal} is either a positive literal
$z_j$ or a negative literal $\neg z_j:=1-z_j$; the literal is
\emph{satisfied} when it evaluates to one.  A \emph{term} is a conjunction of
literals.  Thus a disjunctive normal form (DNF) representation of a Boolean
function $f$ has the form
\[
    f(z)=T_1(z)\vee\cdots\vee T_s(z),
    \qquad
    T_i(z)=\bigwedge_{\ell\in L_i}\ell(z),
\]
where $L_i$ is the collection of literals in the $i$th term.  In particular,
$T_i(z)=1$ exactly when every literal in $L_i$ is satisfied, and $f(z)=1$
exactly when at least one term is satisfied.  The \emph{size} of the DNF is
the number $s$ of terms, and its \emph{width} is
$\max_{i\in[s]}|L_i|$, counting distinct literals.  The DNF has
\emph{exact width} $W$ if $|L_i|=W$ for every $i$.  A term contains a
\emph{contradictory pair} if $L_i$ contains both $z_j$ and $\neg z_j$ for some
$j$.

Our base functions come from Sherstov's near-linear lower bound on the
approximate degree of DNF formulas.  The geometric construction in
\cref{sec:geometric-construction} requires every term to consist of the same
number of literals on distinct variables: this makes satisfying inputs attain
a common score, whereas falsifying inputs miss at least one of those literals.
Recall from \cref{def:approximate-degree} that $\adeg_\alpha(f)$ is the least
degree of a real polynomial that approximates $f$ pointwise to error at most
$\alpha$.

\begin{theorem}[Sherstov~\cite{sherstov2022dnf}]
\label{thm:sherstov-dnf}
For every constant $\delta\in(0,1)$, there are constants
$W_0=W_0(\delta)\ge1$, $C_0=C_0(\delta)\ge1$, $a_0=a_0(\delta)>0$, and
$k_0=k_0(\delta)$ such that for every integer $k\ge k_0$ there is an explicit
function $f:\{0,1\}^k\to\{0,1\}$ with
\[
    \adeg_{1/3}(f)\ge a_0k^{1-\delta}
\]
that is computable by a DNF of width at most $W_0$ and size at most $k^{C_0}$.
\end{theorem}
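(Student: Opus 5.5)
This statement is imported from Sherstov~\cite{sherstov2022dnf}, and the paper uses it as a black box. My plan is therefore to verify that the cited result, read through our conventions, gives exactly the quantifiers stated. I would not reprove it. Three checks are needed.

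\emph{First}, Sherstov's main theorem states that for every fixed $\delta>0$ there is an explicit DNF of width $O_\delta(1)$ and size $k^{O_\delta(1)}$ whose $1/3$-approximate degree is $\Omega_\delta(k^{1-\delta})$. He also states the symmetric CNF version. I would read off $W_0,C_0,a_0$ as the constants hidden in these $O_\delta,\Omega_\delta$ bounds.

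\emph{Second}, his construction may produce functions only for input lengths in some structured family, such as $k=r^c$. I would extend it to every $k\ge k_0$ by padding with dummy variables. Take the largest admissible length $k'\le k$, which satisfies $k'\ge k/O_\delta(1)$, and view its function as a function of $k$ variables that ignores the extra ones. Width, size, and approximate degree are unchanged. The degree bound degrades only by a constant factor, which is absorbed into $a_0$.

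\emph{Third}, approximate degree is invariant under the affine change $f\mapsto 2f-1$ once the error is rescaled. This does not affect the statement itself, but it matters for its later use with $h_\pm$.

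For orientation, the underlying proof is a dual-polynomial hardness-amplification argument, and it is the hard part. One starts from a function of known approximate degree, for instance $\mathrm{AND}_r\circ\mathrm{OR}_t$, whose degree is about $\sqrt{rt}$. One then repeatedly block-composes it with a gadget of constant width and polynomial size, so that the degree exponent rises from $\alpha$ to roughly $\alpha+(1-\alpha)/2$ while the formula remains a constant-width DNF. Each round is certified by building a dual polynomial for the composed function out of the duals for its pieces. This requires dual objects with one-sided error and controlled mass on high Hamming weight inputs. Standard zeroing and correction lemmas then repair the errors the composition introduces.

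After $O(\log(1/\delta))$ rounds the exponent exceeds $1-\delta$. The width and the polynomial size grow only by factors depending on $\delta$. The main obstacle is keeping the result depth-two with constant width through every amplification step, rather than letting the depth grow as it does in earlier AC$^0$ amplification results. This is precisely Sherstov's contribution, so in the paper I would cite it directly.
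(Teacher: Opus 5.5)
Your proposal is correct and takes the same approach as the paper, which also imports this result from Sherstov as a black box and gives no proof of its own. Your quantifier check, the dummy-variable padding to cover every $k\ge k_0$ (sound even if unnecessary), and the rescaling $\adeg_{\alpha}(f)=\adeg_{2\alpha}(2f-1)$ are all fine; the rescaling is what the paper later uses in \cref{lem:maxip-rank-lower-bound}, and your sketch of Sherstov's amplification argument is orientation only and not needed for the citation.
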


Sherstov's theorem guarantees width at most $W_0$.  The next elementary lemma
makes the width exact at a constant-factor cost in the number of terms.

\begin{lemma}[Uniform-width padding]
\label{lem:uniform-width-padding}
Suppose $f:\{0,1\}^k\to\{0,1\}$ is represented by a DNF with $s$ terms, each
containing at most $W_0$ literals involving distinct variables.  For any
integer $W$ with $W_0\le W\le k$, the same function has a DNF in which every
term consists of exactly $W$ literals involving $W$ distinct variables, no
term contains a contradictory pair, and the number of terms is at most $s2^W$.
\end{lemma}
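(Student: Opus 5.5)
The plan is to pad each term independently, one term at a time, using the splitting identity $T=(T\wedge z_j)\vee(T\wedge\neg z_j)$, where $z_j$ is a variable not already mentioned in $T$. The first step is to discard any term of the original DNF that contains a contradictory pair. Such a term is identically zero, so deleting it does not change $f$; the hypothesis that literals involve distinct variables already rules this out, but the observation costs nothing. If every term is deleted, $f\equiv 0$. In that case we still need a representation, and we can take the empty DNF (zero terms), which vacuously has exact width $W$. Alternatively, if one insists on a nonempty DNF, we can handle this degenerate case separately. Since Sherstov's function has large approximate degree, it is not constant, so the case never arises in the application.

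Next, fix a term $T$ whose literal set $L$ involves a set $S\subseteq[k]$ of distinct variables, with $|S|=|L|=t\le W_0\le W$. Choose any set $R\subseteq[k]\setminus S$ with $|R|=W-t$; this is possible because $W\le k$. Replace $T$ by the $2^{W-t}$ terms
\[
    T\wedge\bigwedge_{j\in R}\ell_j,
    \qquad
    \ell_j\in\{z_j,\neg z_j\}.
\]
The disjunction of these terms equals $T$ pointwise. For any input $z$, exactly one sign pattern on $R$ is consistent with $z$, so the disjunction evaluates to $T(z)\wedge 1=T(z)$. This is just iterating the identity $T=(T\wedge z_j)\vee(T\wedge\neg z_j)$ over $j\in R$. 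Each new term has exactly $W$ literals on the $W$ distinct variables $S\cup R$, so none contains a contradictory pair.

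Taking the disjunction over all original terms of these replacements then represents $f$, because OR distributes over the pointwise identity. The number of terms is $\sum_T 2^{W-|L_T|}\le s\,2^{W}$, using $2^{W-t}\le 2^W$ for $t\ge 0$. The only point needing care is the requirement $W\le k$, which guarantees that enough fresh variables exist; everything else is bookkeeping. I do not expect any real obstacle.
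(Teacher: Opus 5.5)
Your proposal is correct and is essentially the paper's argument: you repeatedly apply the splitting identity $T\equiv(T\wedge z_j)\vee(T\wedge\neg z_j)$ on variables absent from $T$, and you obtain the same count $\sum_T 2^{W-|T|}\le s2^W$. The only difference is that you fix one common set $R$ of fresh variables for all descendants of a term, which is a harmless special case of the paper's procedure; your preliminary deletion of contradictory terms is unnecessary under the lemma's distinct-variable hypothesis but does no harm.
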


\begin{proof}
Consider a term $T$ containing $w<W$ literals on $w$ distinct variables.
Since $w<W\le k$, some variable $z_j$ does not occur in $T$.  The identity
\[
    T\equiv(T\wedge z_j)\vee(T\wedge\neg z_j)
\]
replaces $T$ by two terms, each containing $w+1$ literals on $w+1$ distinct
variables, without changing the represented function or introducing a
contradictory pair.  Repeating the operation on every descendant of $T$
produces $2^{W-w}$ terms, each containing exactly $W$ literals on $W$ distinct
variables and no contradictory pair.  Applying this procedure independently
to all original terms produces at most $\sum_T2^{W-|T|}\le s2^W$ terms.
\end{proof}

\begin{corollary}[Hard exact-width DNF]
\label{cor:hard-uniform-dnf}
For every constant $\delta\in(0,1)$, there are constants
$W=W(\delta)\ge1$, $C=C(\delta)\ge1$, $a=a(\delta)>0$, and
$k_0'=k_0'(\delta)$ such that for every $k\ge k_0'$ there is an explicit
function $f:\{0,1\}^k\to\{0,1\}$ satisfying
\[
    \adeg_{1/3}(f)\ge ak^{1-\delta}
\]
and represented by a DNF with $1\le s\le k^C$ terms, each consisting of
exactly $W$ literals on $W$ distinct variables.  In particular, no term
contains a contradictory pair.
\end{corollary}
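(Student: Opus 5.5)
The plan is to apply \cref{lem:uniform-width-padding} to the DNF supplied by \cref{thm:sherstov-dnf}. Set $W:=W_0(\delta)$, so the target width equals the guaranteed maximum width. Take $a:=a_0(\delta)$, and let $C$ be a constant with $k^{C_0}2^{W}\le k^{C}$ for all $k\ge2$; for instance, $C:=C_0+W$ works. Finally set $k_0':=\max\{k_0(\delta),W,2\}$. With $k\ge k_0'$ we then have $W_0\le W\le k$, which is exactly the hypothesis of the padding lemma. Both \cref{thm:sherstov-dnf} and the padding are explicit, so the resulting function is explicit.

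One hypothesis of \cref{lem:uniform-width-padding} needs checking: each term must consist of literals on \emph{distinct} variables. Sherstov's DNF is only promised to have width at most $W_0$, so I first normalize it.
\begin{itemize}
\item If a term contains the same literal twice, I merge the copies. The width is counted in distinct literals, so this does not affect the bound.
\item If a term contains a contradictory pair $z_j,\neg z_j$, it is identically zero, and I delete it.
\end{itemize}
Neither step changes the function, and neither increases the number of terms or the width. After this, every remaining term has at most $W_0$ literals on distinct variables.

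The main obstacle is that the normalized DNF could be empty, which happens only if $f\equiv0$. The corollary requires $s\ge1$. But a constant function has approximate degree $0$, while $\adeg_{1/3}(f)\ge a_0k^{1-\delta}>0$. So at least one term survives, and the padded DNF has $s\ge1$ terms as well.

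Applying \cref{lem:uniform-width-padding} now gives a DNF for the same $f$ in which every term has exactly $W$ literals on $W$ distinct variables, with no contradictory pair. The number of terms is at most $k^{C_0}2^{W}\le k^{C}$. Since $f$ itself is unchanged, the bound $\adeg_{1/3}(f)\ge ak^{1-\delta}$ carries over directly. The statement that no term contains a contradictory pair also follows immediately: the $W$ literals in each term lie on distinct variables.
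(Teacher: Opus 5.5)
Your proposal is correct and follows the paper's proof essentially step for step: you normalize Sherstov's DNF (merging repeated literals and dropping contradictory terms), use $\adeg_{1/3}(f)>0$ to guarantee that a term survives, pad via \cref{lem:uniform-width-padding}, and absorb the factor $2^W$ into $k^C$. One small fix: $W_0$ is not promised to be an integer, so take $W:=\lceil W_0\rceil$ as the paper does; your choices $C:=C_0+W$ and $k_0':=\max\{k_0,W,2\}$ then work unchanged.
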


\begin{proof}
Apply \cref{thm:sherstov-dnf}, delete contradictory terms and repeated
literals, set $W=\lceil W_0\rceil$ and $a=a_0$, and assume
$k\ge\max\{k_0,W\}$.  Since $\adeg_{1/3}(f)>0$, the function $f$ is
nonconstant, so at least one term remains after these deletions.  By
\cref{lem:uniform-width-padding}, the number of terms after padding is at most
$2^Wk^{C_0}$.  Since $W$ depends only on
$\delta$, enlarging $C$ and the threshold $k_0'$ gives
$2^Wk^{C_0}\le k^C$.  The padding procedure is explicit.
\end{proof}

\section{Realizing Exact-Width DNFs by MAX-IP}
\label{sec:geometric-construction}

Let $f:\{0,1\}^k\to\{0,1\}$ be represented by an exact-width DNF
\[
    f=T_1\vee\cdots\vee T_s,
\]
with $s\ge1$ terms, each consisting of exactly $W$ literals on $W$ distinct
variables.  In particular, $1\le W\le k$.  Fix an integer
$B\ge1$ and set $N:=kB$.  Partition $[N]$ into $k$ consecutive blocks
$I_1,\ldots,I_k$ of size $B$, and use the selector family $\calV(N,k)$ from
\cref{eq:block-selector-family}.  Thus $|\calV(N,k)|=B^k$.

The construction uses one document vector for every DNF term and every choice
of one coordinate from each block relevant to that term.  Listing all $B^W$
alignments ensures that, for every selector $V\in\calV(N,k)$, the alignment
chosen by $V$ is present in the document.  We first build vectors whose inner
product counts aligned satisfied literals, and then add one dummy coordinate
to obtain an exact two-valued MAX-IP score.

For term $T_i$, let
\[
    1\le u_{i,1}<\cdots<u_{i,W}\le k
\]
be the indices of its variables.  Let $c_{i,j}\in\{0,1\}$ be the value that
the variable $z_{u_{i,j}}$ must take to satisfy the $j$th literal.  Thus the
literal is positive when $c_{i,j}=1$ and negated when $c_{i,j}=0$.

\subsection{Literal-counting construction}

We first work in $\R^{2N}$, with a positive and a negative coordinate for each
element of $[N]$.  Let $e_1,\ldots,e_{2N}$ denote the standard basis.

\begin{definition}[Literal-counting vectors]
\label{def:literal-counting-vectors}
Fix $V=\{v_1<\cdots<v_k\}\in\calV(N,k)$ and $w\in\{0,1\}^k$.  Define
\begin{equation}
    q_{V,w}
    :=\frac1{\sqrt{k}}
      \sum_{r=1}^k
      \bigl(w_re_{v_r}+(1-w_r)e_{N+v_r}\bigr).
    \label{eq:unpadded-query}
\end{equation}

For $y\in\{0,1\}^N$, a term $T_i$, and an alignment
\[
    \tau=(\tau(1),\ldots,\tau(W))
    \in I_{u_{i,1}}\times\cdots\times I_{u_{i,W}},
\]
set
\[
    b_{i,j}(y,\tau):=y_{\tau(j)}\oplus c_{i,j}
\]
and define
\begin{equation}
    p_{i,\tau}(y)
    :=\frac1{\sqrt W}
      \sum_{j=1}^W
      \bigl(
        b_{i,j}(y,\tau)e_{\tau(j)}
        +(1-b_{i,j}(y,\tau))e_{N+\tau(j)}
      \bigr).
    \label{eq:unpadded-document}
\end{equation}
\end{definition}

Both vectors in \cref{def:literal-counting-vectors} have unit norm: their nonzero
coordinates are distinct, and they have respectively $k$ entries of magnitude
$1/\sqrt{k}$ and $W$ entries of magnitude $1/\sqrt W$.

The choice between the positive and negative copies of $\tau(j)$ is arranged
so that overlap with the query records whether the $j$th literal is satisfied
by $z=w\oplus y|_V$.  The following identity makes this interpretation exact.

\begin{lemma}[Literal-counting identity]
\label{lem:literal-counting}
Let $z=w\oplus y|_V$.  For every $i$ and $\tau$,
\begin{equation}
    \langle q_{V,w},p_{i,\tau}(y)\rangle
    =\frac1{\sqrt{kW}}
      \sum_{j=1}^W
      \indicator\bigl[
        \tau(j)=v_{u_{i,j}}
        \text{ and }
        z_{u_{i,j}}=c_{i,j}
      \bigr].
    \label{eq:literal-counting}
\end{equation}
\end{lemma}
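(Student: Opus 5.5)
The plan is to expand both vectors in the standard basis and compute the inner product coordinate by coordinate. The query $q_{V,w}$ is supported on the $k$ coordinates $\{v_r, N+v_r\}$, one per block. For each $r$, the coordinate used is $e_{v_r}$ if $w_r=1$ and $e_{N+v_r}$ if $w_r=0$. The term vector $p_{i,\tau}(y)$ is supported on $W$ coordinates, one for each $j$. For each $j$, the coordinate used is $e_{\tau(j)}$ if $b_{i,j}=1$ and $e_{N+\tau(j)}$ if $b_{i,j}=0$. The product of the two scalings is $1/\sqrt{kW}$. So the inner product equals $1/\sqrt{kW}$ times the number of pairs $(r,j)$ whose chosen basis vectors coincide. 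Since the query's nonzero coordinates are distinct, as are the document's, each index contributes at most once, and the count is simply the size of the intersection of the two supports.

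First, I will identify which pairs $(r,j)$ can match. A matching basis vector requires $v_r=\tau(j)$ as elements of $[N]$. Here $\tau(j)\in I_{u_{i,j}}$ and $v_r\in I_r$, and the blocks are disjoint. Hence $v_r=\tau(j)$ forces $r=u_{i,j}$, and then the condition becomes $\tau(j)=v_{u_{i,j}}$. Because the variables $u_{i,1},\dots,u_{i,W}$ are distinct, each $j$ determines a unique candidate $r$ and distinct $j$ give distinct $r$. So the count is a sum over $j$ alone, with no double counting.

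Second, given $\tau(j)=v_{u_{i,j}}$, I will check when the positive/negative copies agree. Write $r=u_{i,j}$. The copies agree iff $w_r=b_{i,j}(y,\tau)=y_{\tau(j)}\oplus c_{i,j}=y_{v_r}\oplus c_{i,j}$. This is equivalent to $w_r\oplus y_{v_r}=c_{i,j}$, that is, $z_r=c_{i,j}$ with $z=w\oplus y|_V$, where I use that $(y|_V)_r=y_{v_r}$. Summing the indicator of both conditions over $j$ gives \cref{eq:literal-counting}.

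There is no real obstacle; the only care needed is in the bookkeeping. The block-disjointness step is what justifies reducing the double sum over $(r,j)$ to a single sum over $j$. The XOR manipulation must also correctly use the convention that $e_{v}$ is the "1" copy and $e_{N+v}$ the "0" copy on both sides.
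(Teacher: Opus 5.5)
Your proof is correct and follows essentially the same route as the paper: a coordinatewise computation in which misaligned positions contribute zero by block disjointness, and an aligned position contributes exactly when $w_{u_{i,j}}=b_{i,j}(y,\tau)$, which is equivalent to $z_{u_{i,j}}=c_{i,j}$. You also note explicitly that the distinct variables of a term rule out double counting, which the paper's block-by-block argument leaves implicit.
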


\begin{proof}
Fix $j\in[W]$.  In the block indexed by $u_{i,j}$, the query is supported at
the positive or negative copy of $v_{u_{i,j}}$, while the document vector is
supported at the corresponding copy of $\tau(j)$.  If
$\tau(j)\ne v_{u_{i,j}}$, the supports are disjoint and the contribution is
zero.

Suppose $\tau(j)=v_{u_{i,j}}$.  The contribution, before the common factor
$1/\sqrt{kW}$, is
\[
    w_{u_{i,j}}b_{i,j}
    +(1-w_{u_{i,j}})(1-b_{i,j}),
\]
which equals one exactly when $w_{u_{i,j}}=b_{i,j}$.  Using the definition of
$b_{i,j}$ and the equality $\tau(j)=v_{u_{i,j}}$, this condition is
equivalent to
\[
    w_{u_{i,j}}\oplus y_{v_{u_{i,j}}}=c_{i,j},
\]
or $z_{u_{i,j}}=c_{i,j}$.  Summing the contributions proves
\cref{eq:literal-counting}.
\end{proof}

For fixed $y$, let
\begin{equation}
    P(y)
    :=\bigl\{
      p_{i,\tau}(y):
      i\in[s],\
      \tau\in I_{u_{i,1}}\times\cdots\times I_{u_{i,W}}
    \bigr\}.
    \label{eq:unpadded-cloud}
\end{equation}
There are at most $sB^W$ distinct vectors in $P(y)$.

\begin{lemma}[Unpadded MAX-IP gap]
\label{lem:unpadded-gap}
For $z=w\oplus y|_V$,
\[
    \MAXIP(q_{V,w},P(y))
    \begin{cases}
      =\dfrac{W}{\sqrt{kW}},& f(z)=1,\\[1.2ex]
      \le\dfrac{W-1}{\sqrt{kW}},& f(z)=0.
    \end{cases}
\]
\end{lemma}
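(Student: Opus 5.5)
The plan is to read everything off \cref{lem:literal-counting}. For each pair $(i,\tau)$, the inner product $\langle q_{V,w},p_{i,\tau}(y)\rangle$ equals $1/\sqrt{kW}$ times the number of indices $j\in[W]$ with $\tau(j)=v_{u_{i,j}}$ and $z_{u_{i,j}}=c_{i,j}$. This count lies between $0$ and $W$. It equals $W$ exactly when $\tau$ is the alignment induced by $V$ and every literal of $T_i$ is satisfied by $z$. So I reduce the statement to a combinatorial claim about the maximum of this count over all $(i,\tau)$.

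\emph{True case.} Suppose $f(z)=1$. Pick a term $T_i$ with $T_i(z)=1$, so $z_{u_{i,j}}=c_{i,j}$ for every $j$. Define $\tau(j):=v_{u_{i,j}}$. This is a legitimate alignment, because $V\in\calV(N,k)$ gives $v_{u_{i,j}}\in I_{u_{i,j}}$, so $\tau$ lies in $I_{u_{i,1}}\times\cdots\times I_{u_{i,W}}$. Hence $p_{i,\tau}(y)\in P(y)$. Every indicator in \cref{eq:literal-counting} equals one, so the inner product is $W/\sqrt{kW}$. Every other element of $P(y)$ has count at most $W$, so the maximum is exactly $W/\sqrt{kW}$.

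\emph{False case.} Suppose $f(z)=0$. Then every term $T_i$ is falsified, so for each $i$ there is some $j$ with $z_{u_{i,j}}\ne c_{i,j}$. For that $j$ the indicator in \cref{eq:literal-counting} vanishes regardless of $\tau$. So every inner product is at most $(W-1)/\sqrt{kW}$, and so is the maximum over the finite nonempty set $P(y)$ (nonempty since $s\ge1$ and $B\ge1$).

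There is no real obstacle here. The only point needing care is the membership of the $V$-induced alignment in $P(y)$. This relies on each term's $W$ variables being distinct, so that $\tau$ assigns each literal its own block coordinate. That distinctness is guaranteed by the exact-width hypothesis from \cref{cor:hard-uniform-dnf}.
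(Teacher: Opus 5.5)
Your proposal is correct and follows essentially the paper's own proof: both read each score off the literal-counting identity, attain $W/\sqrt{kW}$ in the true case with the $V$-induced alignment of a satisfied term, and cap every score at $(W-1)/\sqrt{kW}$ in the false case because each term has an unsatisfied literal whatever the alignment. Your remark that $P(y)$ is nonempty (since $s\ge1$ and $B\ge1$) is a small point the paper leaves implicit.
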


\begin{proof}
For a fixed term $T_i$, the alignment
$\tau(j)=v_{u_{i,j}}$ for every $j$ is available in the point cloud.  By
\cref{lem:literal-counting}, it produces one contribution
$1/\sqrt{kW}$ for each literal of $T_i$ satisfied by $z$.  No other alignment
can produce a larger contribution, because misaligned coordinates contribute
zero.

If $f(z)=1$, some exact-width term has all $W$ literals satisfied and achieves
$W/\sqrt{kW}$.  No vector can exceed this value.  If $f(z)=0$, every term has
at most $W-1$ satisfied literals, giving the stated upper bound.
\end{proof}

\subsection{An exact two-valued realization}

The false case of \cref{lem:unpadded-gap} is only an upper bound.  To obtain an
exact two-valued MAX-IP score, define
\begin{equation}
    \eta:=\frac{W-1}{\sqrt{kW+(W-1)^2}},
    \qquad
    v_1:=\frac{W}{\sqrt{kW+(W-1)^2}}.
    \label{eq:eta-v-one}
\end{equation}
Lift the query and document vectors to $\R^{2N+1}$ by
\begin{equation}
    q'_{V,w}:=(\sqrt{1-\eta^2}\,q_{V,w},\eta),
    \qquad
    p'_{i,\tau}(y):=(p_{i,\tau}(y),0),
    \label{eq:padded-vectors}
\end{equation}
and let $p_0=(0,\ldots,0,1)$.  Finally set
\begin{equation}
    Y_y
    :=\bigl\{
      p'_{i,\tau}(y):
      i\in[s],\
      \tau\in I_{u_{i,1}}\times\cdots\times I_{u_{i,W}}
    \bigr\}\cup\{p_0\}.
    \label{eq:padded-cloud}
\end{equation}

\begin{proposition}[Two-valued MAX-IP realization]
\label{prop:two-valued-realization}
The query vectors $q'_{V,w}$ and every vector in $Y_y$ lie on the unit sphere
in $\R^{2kB+1}$, and $|Y_y|\le sB^W+1$.  For every $(V,w)$ and $y$, with
$z=w\oplus y|_V$,
\begin{equation}
    \MAXIP(q'_{V,w},Y_y)
    =
    \begin{cases}
      v_1,&f(z)=1,\\
      \eta,&f(z)=0.
    \end{cases}
    \label{eq:exact-two-values}
\end{equation}
Moreover, the two values are separated by
\begin{equation}
    v_1-\eta=\frac1{\sqrt{kW+(W-1)^2}}.
    \label{eq:two-valued-gap}
\end{equation}
\end{proposition}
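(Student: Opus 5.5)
The plan is to check the norms and the cardinality first, then compute the lifted inner products by reducing them to \cref{lem:unpadded-gap} through one scaling identity.

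\textbf{Unit norms and size.} Since $q_{V,w}$ is a unit vector (as noted after \cref{def:literal-counting-vectors}), we get $\|q'_{V,w}\|^2=(1-\eta^2)+\eta^2=1$, provided $\eta\in[0,1)$. This holds because $(W-1)^2<kW+(W-1)^2$. Each $p'_{i,\tau}(y)$ has the same norm as $p_{i,\tau}(y)$, namely one, and $p_0$ is a basis vector. The cardinality bound follows from counting at most $s$ terms times $B^W$ alignments, plus the single vector $p_0$.

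\textbf{Inner products.} For a term vector, $\langle q'_{V,w},p'_{i,\tau}(y)\rangle=\sqrt{1-\eta^2}\,\langle q_{V,w},p_{i,\tau}(y)\rangle$, since the last coordinate of $p'$ is zero. For the dummy vector, $\langle q'_{V,w},p_0\rangle=\eta$. The key computation is
\[
1-\eta^2=\frac{kW}{kW+(W-1)^2},
\qquad
\sqrt{1-\eta^2}=\frac{\sqrt{kW}}{\sqrt{kW+(W-1)^2}}.
\]
So the factor $\sqrt{1-\eta^2}$ exactly cancels the normalization $1/\sqrt{kW}$ of \cref{lem:literal-counting}. A term vector with $t$ aligned, satisfied literals therefore has lifted inner product $t/\sqrt{kW+(W-1)^2}$.

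\textbf{Maximum over the cloud.} Apply \cref{lem:unpadded-gap}, after multiplying by $\sqrt{1-\eta^2}>0$, which preserves maxima.
\begin{itemize}
\item If $f(z)=1$, the term vectors reach $W/\sqrt{kW+(W-1)^2}=v_1$, and none exceeds it. Since $v_1>\eta$, the dummy vector does not change the maximum, so $\MAXIP=v_1$.
\item If $f(z)=0$, every term vector is at most $(W-1)/\sqrt{kW+(W-1)^2}=\eta$. The dummy vector attains $\eta$ exactly, so $\MAXIP=\eta$.
\end{itemize}
The gap \cref{eq:two-valued-gap} then follows by subtracting the two values.

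\textbf{Main point.} There is no real obstacle; the only delicate step is the choice of $\eta$. It has to do two things at once: make the rescaled false-case bound coincide exactly with the dummy floor, and keep $q'_{V,w}$ on the unit sphere. Solving $\sqrt{1-\eta^2}\,(W-1)/\sqrt{kW}=\eta$ gives precisely the value in \cref{eq:eta-v-one}, and verifying this identity is the step to write out carefully.
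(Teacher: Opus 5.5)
Your proposal is correct and follows the paper's proof essentially step for step. Both arguments use unit norms via $(1-\eta^2)+\eta^2=1$, then the scaling identity $\sqrt{1-\eta^2}=\sqrt{kW}/\sqrt{kW+(W-1)^2}$, which sends the true value and false threshold of \cref{lem:unpadded-gap} exactly to $v_1$ and $\eta$, and finally the dummy vector $p_0$ as the floor $\eta$ in the false case. Your explicit checks that $0\le\eta<1$ and that the matching condition forces the stated $\eta$ are small details the paper leaves implicit.
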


\begin{proof}
The vectors in \cref{def:literal-counting-vectors} are unit vectors, and
\[
    \norm{q'_{V,w}}_2^2
    =(1-\eta^2)\norm{q_{V,w}}_2^2+\eta^2
    =1.
\]
Appending a zero preserves the norms of the document vectors, while $p_0$ is
also a unit vector.  The ambient dimension and the bound on $|Y_y|$ follow
directly from the construction.

The dummy vector gives
$\langle q'_{V,w},p_0\rangle=\eta$.  Every other document vector satisfies
\[
    \langle q'_{V,w},p'_{i,\tau}(y)\rangle
    =\sqrt{1-\eta^2}\,
      \langle q_{V,w},p_{i,\tau}(y)\rangle.
\]
From \cref{eq:eta-v-one},
\[
    \sqrt{1-\eta^2}
    =\frac{\sqrt{kW}}{\sqrt{kW+(W-1)^2}}.
\]
Therefore the false-case threshold in \cref{lem:unpadded-gap} scales exactly
to $\eta$, while the true-case value scales exactly to $v_1$.  If $f(z)=0$,
all term vectors have inner product at most $\eta$ and the dummy vector
attains $\eta$.  If $f(z)=1$, a term vector attains $v_1>\eta$ and no vector
exceeds it.  Finally, subtracting the two quantities in
\cref{eq:eta-v-one} gives \cref{eq:two-valued-gap}.
\end{proof}

\section{Approximate Rank of the MAX-IP Matrix}
\label{sec:matrix-reduction}

Retain the notation and construction of \cref{sec:geometric-construction}.
Index the queries by $(V,w)\in\calV(N,k)\times\{0,1\}^k$ and the documents by
$y\in\{0,1\}^N$.  Define the MAX-IP matrix
\begin{equation}
    M_{(V,w),y}:=\MAXIP(q'_{V,w},Y_y).
    \label{eq:maxip-matrix}
\end{equation}
Let $F\in\{-1,1\}^{(B^k2^k)\times2^N}$ be the transpose of the
$(N,k,f_\pm)$-pattern matrix, where
\begin{equation}
    f_\pm(z):=2f(z)-1.
    \label{eq:f-sign}
\end{equation}
Thus $F_{(V,w),y}=2f(w\oplus y|_V)-1$.

The two-valued realization makes $M$ an affine copy of $F$.  After rescaling,
an entrywise approximation to $M$ becomes an approximation to $F$, at the
cost of at most one in rank.  The pattern matrix method can then be applied
directly.

\subsection{Affine reduction to the pattern matrix}

\begin{proposition}[Affine reduction]
\label{prop:affine-reduction}
The matrices $M$ and $F$ satisfy
\begin{equation}
    M=\frac{v_1+\eta}{2}J+\frac{v_1-\eta}{2}F,
    \label{eq:affine-identity}
\end{equation}
where $J$ is the all-ones matrix of the same dimensions.  Moreover, let
$\rho>0$ satisfy $v_1-\eta\ge8\rho$.  If $\widetilde M$ has rank at most $D$
and $\norm{\widetilde M-M}_\infty\le\rho$, then
\begin{equation}
    \widetilde F
    :=\frac{2}{v_1-\eta}
      \left(\widetilde M-\frac{v_1+\eta}{2}J\right)
    \label{eq:shifted-approximation}
\end{equation}
has rank at most $D+1$ and satisfies
$\norm{\widetilde F-F}_\infty\le1/4$.
\end{proposition}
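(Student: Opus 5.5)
The proof splits into three short verifications: the affine identity, the rank bound, and the error bound. For \cref{eq:affine-identity}, I would check it entrywise using \cref{prop:two-valued-realization}. Fix $(V,w)$ and $y$, and set $z=w\oplus y|_V$. If $f(z)=1$, then $F_{(V,w),y}=1$ and $M_{(V,w),y}=v_1$, while the right-hand side equals $\frac{v_1+\eta}{2}+\frac{v_1-\eta}{2}=v_1$. If $f(z)=0$, then $F_{(V,w),y}=-1$ and $M_{(V,w),y}=\eta$, and the right-hand side equals $\frac{v_1+\eta}{2}-\frac{v_1-\eta}{2}=\eta$. Since \cref{eq:exact-two-values} is an exact equality in both cases, the identity holds entrywise.

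For the rank claim, $\widetilde F$ is a nonzero scalar multiple of $\widetilde M-\frac{v_1+\eta}{2}J$. The scalar is nonzero because $v_1-\eta>0$ by \cref{eq:two-valued-gap}. Since $J$ has rank one, subadditivity of rank gives $\rank(\widetilde F)\le\rank(\widetilde M)+\rank(J)\le D+1$.

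For the error bound, I would solve \cref{eq:affine-identity} for $F$:
\[
F=\frac{2}{v_1-\eta}\Bigl(M-\frac{v_1+\eta}{2}J\Bigr).
\]
Subtracting this from the definition of $\widetilde F$ gives
\[
\widetilde F-F=\frac{2}{v_1-\eta}\,(\widetilde M-M).
\]
Hence $\norm{\widetilde F-F}_\infty\le\frac{2\rho}{v_1-\eta}\le\frac{2\rho}{8\rho}=\frac14$, using the hypothesis $v_1-\eta\ge8\rho$.

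None of these steps is a real obstacle. The only points needing care are that $v_1-\eta>0$, so that the division is legitimate, and that the identity is exact rather than approximate. Both come from the exact two-valued form in \cref{prop:two-valued-realization}, which the lifting by the dummy coordinate was designed to provide.
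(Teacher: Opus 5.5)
Your proposal is correct and follows the paper's proof: an entrywise check of the affine identity using \cref{prop:two-valued-realization}, rank subadditivity with $\rank(J)=1$, and the identity $\widetilde F-F=\frac{2}{v_1-\eta}(\widetilde M-M)$, which combined with $v_1-\eta\ge8\rho$ gives the $1/4$ bound. Your remark that $v_1-\eta>0$ (so the division is legitimate) is a point the paper leaves implicit; it also follows directly from the hypothesis $v_1-\eta\ge 8\rho>0$.
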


\begin{proof}
By \cref{prop:two-valued-realization}, an entry of $M$ is $v_1$ when the
corresponding entry of $F$ is $1$, and it is $\eta$ when that entry is $-1$.
This proves \cref{eq:affine-identity}.

Since $J$ has rank one, rank subadditivity gives
$\rank(\widetilde F)\le D+1$.  Combining
\cref{eq:affine-identity,eq:shifted-approximation},
\[
    \norm{\widetilde F-F}_\infty
    =\frac{2}{v_1-\eta}\norm{\widetilde M-M}_\infty
    \le\frac{2\rho}{8\rho}=\frac14.
\]
\end{proof}

\subsection{Rank lower bound}

\begin{lemma}[Approximate rank of the constructed MAX-IP matrix]
\label{lem:maxip-rank-lower-bound}
For every $\rho>0$ such that $v_1-\eta\ge8\rho$, the matrix $M$ in
\cref{eq:maxip-matrix} satisfies
\[
    \arank_\rho(M)
    \ge \frac1{225}B^{\adeg_{1/3}(f)}-1.
\]
\end{lemma}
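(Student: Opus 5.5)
Take a minimum-rank approximation and push it through \cref{prop:affine-reduction}. Let $D:=\arank_\rho(M)$ and choose $\widetilde M$ with $\rank(\widetilde M)=D$ and $\norm{\widetilde M-M}_\infty\le\rho$. Since $v_1-\eta\ge8\rho$, the proposition yields a matrix $\widetilde F$ with
\[
    \rank(\widetilde F)\le D+1,
    \qquad
    \norm{\widetilde F-F}_\infty\le 1/4.
\]
Hence $\arank_{1/4}(F)\le D+1$. Here $F$ is the transpose of the $(N,k,f_\pm)$-pattern matrix, so by the remark after \cref{def:pattern-matrix} it has the same approximate rank as the pattern matrix itself.

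Next apply \cref{thm:pattern-rank} to $f_\pm$ with $\gamma=1/4$ and a suitable $\alpha\in[1/4,1)$. The approximate degree of $f$ must be converted to that of $f_\pm=2f-1$. If $p$ approximates $f_\pm$ to error $\alpha$, then $(p+1)/2$ approximates $f$ to error $\alpha/2$ and has the same degree. So $\adeg_\alpha(f_\pm)\ge\adeg_{\alpha/2}(f)$. Taking $\alpha=2/3$ gives $\adeg_{2/3}(f_\pm)\ge\adeg_{1/3}(f)$. With $N/k=B$ this gives
\[
    \arank_{1/4}(F)\ge\left(\frac{2/3-1/4}{1+1/4}\right)^2B^{\adeg_{2/3}(f_\pm)}
    =\left(\frac{5/12}{5/4}\right)^2B^{\adeg_{2/3}(f_\pm)}
    =\frac19B^{\adeg_{2/3}(f_\pm)}.
\]
Since $B\ge1$, the exponent can be lowered to $\adeg_{1/3}(f)$. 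Combining the two displays gives $D+1\ge\frac19B^{\adeg_{1/3}(f)}$.

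This is stronger than the claimed constant $1/225$, so the lemma follows. The factor $1/225=(1/15)^2$ likely comes from a different, less tight choice of $\alpha$. Any $\alpha$ with $\alpha\ge1/4$ and $\alpha/2\le1/3$ works, and each such choice gives a constant at least $1/225$ after the same degree conversion.

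The only delicate step is the normalization between the $\{0,1\}$-valued $f$ and the $\pm1$-valued $f_\pm$. The error parameter doubles under this conversion, and $\alpha$ must be chosen so that the degree bound for $f$ at error $1/3$ still applies and $\gamma=1/4\le\alpha<1$ holds. Once that check is done, the rest is a direct chaining of \cref{prop:affine-reduction} and \cref{thm:pattern-rank}.
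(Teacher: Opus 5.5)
Your proof is correct and follows the same route as the paper: the affine reduction of \cref{prop:affine-reduction}, then \cref{thm:pattern-rank} with the conversion between $f$ and $f_\pm$. The only difference is your choice $\alpha=2/3$, which matches the error scales exactly ($\adeg_{2/3}(f_\pm)=\adeg_{1/3}(f)$) and gives the constant $1/9$, whereas the paper takes $\alpha=1/3$, passes through $\adeg_{1/3}(f_\pm)=\adeg_{1/6}(f)\ge\adeg_{1/3}(f)$, and gets $1/225$. One small correction to your closing aside: for $\alpha\in[1/4,1/3)$ the factor $\bigl((\alpha-1/4)/(5/4)\bigr)^2$ is below $1/225$ (it vanishes at $\alpha=1/4$), so only $\alpha\in[1/3,2/3]$ yields a constant of at least $1/225$.
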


\begin{proof}
The affine conversion in \cref{prop:affine-reduction} shows that
\[
    \arank_\rho(M)+1\ge\arank_{1/4}(F).
\]
The matrix $F$ is the transpose of the $(N,k,f_\pm)$-pattern matrix.  Apply
\cref{thm:pattern-rank} with $\alpha=1/3$ and $\gamma=1/4$.  Since
$N/k=B$,
\begin{equation}
    \arank_{1/4}(F)
    \ge
    \left(\frac{1/3-1/4}{1+1/4}\right)^2
    B^{\adeg_{1/3}(f_\pm)}
    =\frac1{225}B^{\adeg_{1/3}(f_\pm)}.
    \label{eq:pattern-rank-applied}
\end{equation}

If a polynomial approximates $f_\pm=2f-1$ to error $1/3$, then adding one and
dividing by two gives an approximation to $f$ with error $1/6$.  Conversely,
an error-$1/6$ approximation to $f$, after multiplying by two and subtracting
one, approximates $f_\pm$ to error $1/3$.  Hence
\[
    \adeg_{1/3}(f_\pm)
    =\adeg_{1/6}(f)
    \ge\adeg_{1/3}(f).
\]
Substitution into \cref{eq:pattern-rank-applied} proves the lemma.
\end{proof}

\section{Proof of the Main Theorem}
\label{sec:main-proof}

We now choose the parameters of the general construction and translate
\cref{lem:maxip-rank-lower-bound} into a power of $m$.

\begin{proof}[Proof of \cref{thm:main}]
Fix $\delta\in(0,1)$ and the constants $W,C,a,k_0'$ from
\cref{cor:hard-uniform-dnf}.  Choose $\eps_\delta\in(0,1/12)$ small enough
that for every $\eps<\eps_\delta$, the integer
\begin{equation}
    k
    :=\left\lfloor
       \frac{1}{64W\eps^2}-\frac{(W-1)^2}{W}
      \right\rfloor
    \label{eq:k-choice}
\end{equation}
satisfies
$k\ge\max\{W,k_0'\}$ and
\begin{equation}
    c_1(\delta)\eps^{-2}
    \le k\le
    c_2(\delta)\eps^{-2}
    \label{eq:k-two-sided}
\end{equation}
for positive constants $c_1(\delta),c_2(\delta)$.

Let $f$ be the exact-width DNF supplied by
\cref{cor:hard-uniform-dnf}, so that
\begin{equation}
    \adeg_{1/3}(f)\ge ak^{1-\delta}
    \qquad\text{and}\qquad
    s\le k^C,
    \label{eq:hard-dnf-properties}
\end{equation}
where $s$ is its number of terms.  The choice in \cref{eq:k-choice} gives
\begin{equation}
    kW+(W-1)^2\le\frac1{64\eps^2},
    \qquad\text{hence}\qquad
    v_1-\eta\ge8\eps
    \label{eq:gap-calibration}
\end{equation}
by \cref{eq:two-valued-gap}.

By \cref{eq:k-two-sided}, there is a constant $c_3(\delta)$ such that
$s\le c_3(\delta)\eps^{-2C}$.  Set
\[
    A_\delta:=4C+1.
\]
For all sufficiently small $\eps$, the assumption
$m\ge(1/\eps)^{A_\delta}$ implies
\begin{equation}
    m\ge s^2+1.
    \label{eq:m-dominates-s}
\end{equation}
Define
\begin{equation}
    B:=\left\lfloor\left(\frac{m-1}{s}\right)^{1/W}\right\rfloor,
    \qquad
    N:=kB.
    \label{eq:B-and-N}
\end{equation}
Then $B\ge1$, $sB^W+1\le m$, and
\[
    B
    \ge\left\lfloor(m-1)^{1/(2W)}\right\rfloor
    \ge m^{b_\delta}
\]
for all sufficiently large $m$, where one may take any fixed
$b_\delta<1/(2W)$ after decreasing $\eps_\delta$ if necessary.

\medskip\noindent\emph{The dataset.}
Apply the construction of \cref{sec:geometric-construction} with this
$f,k,s,W$, and $B$, and define
\[
    \calQ
    :=\bigl\{
      q'_{V,w}:V\in\calV(N,k),\ w\in\{0,1\}^k
    \bigr\}.
\]
The query vectors are distinct because their supports and their choices of
positive or negative coordinates determine $(V,w)$.  Let $\calX$ contain one
representative of each distinct point cloud among the $Y_y$, $y\in\{0,1\}^N$.
By \cref{prop:two-valued-realization}, all vectors lie on the unit sphere in ambient
dimension
\[
    d=2N+1=2kB+1\le2km+1\le m^{O_\delta(1)},
\]
where the final inequality follows from
$m\ge(1/\eps)^{A_\delta}$ and $k=O_\delta(1/\eps^2)$.  Every document contains
at most $sB^W+1\le m$ vectors.

\medskip\noindent\emph{Dimension lower bound.}
Let $M^\circ$ be the MAX-IP score matrix indexed by $\calQ\times\calX$.
The indexed matrix $M$ from \cref{eq:maxip-matrix} is obtained from $M^\circ$
by duplicating columns.  Hence
$\arank_\eps(M^\circ)=\arank_\eps(M)$: approximations restrict from the larger
matrix to the smaller one and extend in the other direction by duplicating
columns.  By \cref{prop:embedding-rank}, every data-dependent embedding of
$\calQ\times\calX$ in dimension $D$ therefore satisfies
\[
    D\ge\arank_\eps(M^\circ)=\arank_\eps(M).
\]
Combining \cref{eq:gap-calibration,lem:maxip-rank-lower-bound} with
\cref{eq:hard-dnf-properties},
\[
    D
    \ge\frac1{225}B^{ak^{1-\delta}}-1
    \ge\frac1{225}
       m^{ab_\delta k^{1-\delta}}-1.
\]
Using the lower bound on $k$ in \cref{eq:k-two-sided}, there is a constant
$c_4(\delta)>0$ such that
\[
    ab_\delta k^{1-\delta}
    \ge \frac{c_4(\delta)}{\eps^{2-2\delta}}.
\]
After decreasing the constant in the exponent to absorb the factor $1/225$
and the additive term $-1$, we obtain
\[
    D\ge m^{c_\delta/\eps^{2-2\delta}}
\]
for a constant $c_\delta>0$.

\medskip\noindent\emph{Cardinality.}
Finally, the number of queries is
$|\calQ|=B^k2^k$ and the number of indexed documents is $2^N=2^{kB}$.
Since $B\le m$ and $k\le m^{O_\delta(1)}$ in the stated regime,
both quantities are at most $2^{m^{O_\delta(1)}}$.  Deduplication can only
reduce the number of documents.  This proves the theorem.
\end{proof}

Because singleton-query Chamfer similarity is exactly MAX-IP, the theorem has
an immediate multi-vector consequence.

\begin{corollary}[Chamfer dimension lower bound]
\label{cor:chamfer-lower-bound}
Under the parameters of \cref{thm:main}, there are query point clouds and
document point clouds of at most $m$ unit vectors for which every
data-dependent single-vector $\eps$-approximation of all pairwise normalized
Chamfer similarities has dimension
\[
    D\ge m^{c_\delta/\eps^{2-2\delta}}.
\]
The query point clouds in the construction all have size one.
\end{corollary}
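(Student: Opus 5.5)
The plan is to reduce the corollary directly to \cref{thm:main}, using the fact that normalized Chamfer similarity with singleton queries equals MAX-IP. First, fix $\delta$, $\eps<\eps_\delta$ and $m\ge(1/\eps)^{A_\delta}$. Take the query set $\calQ$ and the document family $\calX$ produced in the proof of \cref{thm:main}. From these, form the query point clouds $\{q\}$ for $q\in\calQ$. The document point clouds are the members of $\calX$ themselves, each of which has between $1$ and $m$ unit vectors. Every query cloud has size one, and one is at most $m$ since $m\ge1$. So the size constraints in the corollary hold automatically, as does the claim that all query clouds are singletons.

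Next, suppose we are given any data-dependent maps $\Psi_Q:\{\{q\}:q\in\calQ\}\to\R^D$ and $\Psi_X:\calX\to\R^D$ satisfying
\[
    \abs*{\langle\Psi_Q(\{q\}),\Psi_X(X)\rangle-\Chamfer(\{q\},X)}\le\eps
\]
for all pairs. Define $\Phi_Q(q):=\Psi_Q(\{q\})$ and $\Phi_X:=\Psi_X$. The map $q\mapsto\{q\}$ is a bijection onto the query clouds, so $\Phi_Q$ is well defined. By the identity $\Chamfer(\{q\},X)=\MAXIP(q,X)$ from the preliminaries, these maps satisfy \cref{eq:main-embedding-guarantee}. \cref{thm:main} then forces $D\ge m^{c_\delta/\eps^{2-2\delta}}$. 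Equivalently, one can argue matrix-wise: the Chamfer score matrix is literally the matrix $M^\circ$, so \cref{prop:embedding-rank} gives the same bound.

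There is no real obstacle. The only point needing care is that the quantifiers match. The corollary allows arbitrary data-dependent maps on the point-cloud families, and \cref{thm:main} applies to arbitrary maps on $\calQ$ and $\calX$, so composing with the bijection $q\mapsto\{q\}$ loses nothing.
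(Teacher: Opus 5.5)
Your proposal is correct and follows the same route as the paper: replace each query $q\in\calQ$ by the singleton cloud $\{q\}$, use the identity $\Chamfer(\{q\},X)=\MAXIP(q,X)$, and apply \cref{thm:main}. Your explicit checks that the size constraints hold and that composing with the bijection $q\mapsto\{q\}$ preserves the data-dependent quantifier just spell out the paper's one-line argument.
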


\begin{proof}
Replace each query vector $q\in\calQ$ by the singleton point cloud $\{q\}$ and
use the identity $\Chamfer(\{q\},X)=\MAXIP(q,X)$.  Then apply
\cref{thm:main}.
\end{proof}

\paragraph{The point-cloud threshold.}
The assumption $m\ge(1/\eps)^{A_\delta}$ serves a concrete purpose.  The hard
DNF has $s\le k^C$ terms and the document cloud contains one vector for each
term and each of its $B^W$ block alignments.  To obtain a block size
$B=m^{\Omega_\delta(1)}$, the available point-cloud budget must dominate the
formula size by a polynomial factor.  Our proof enforces the convenient
condition $m\ge s^2+1$.  No optimization of $A_\delta$ is attempted here.
Whether the point-cloud threshold can be reduced, perhaps to the
$m\gtrsim1/\eps^2$ regime, remains open.

\section*{Acknowledgements}
The proof was first obtained using a fully automated Gemini-based agentic system developed internally at Google. The authors have verified the proof and edited it for clarity of presentation, and take responsibility for the final version.

\bibliographystyle{alpha}
\bibliography{references}

\end{document}